\newcommand{\be}{\begin{equation}}
\newcommand{\en}{\end{equation}}
\newcommand{\bea}{\begin{eqnarray}}
\newcommand{\ena}{\end{eqnarray}}
\newcommand{\beano}{\begin{eqnarray*}}
\newcommand{\enano}{\end{eqnarray*}}
\newcommand{\bee}{\begin{enumerate}}
\newcommand{\ene}{\end{enumerate}}
\newcommand{\mb}{\mathbb}
\newcommand{\mc}{\mathcal}
\newcommand{\D}{{\mc D}}
\newcommand{\E}{{\cal E}}
\newcommand{\F}{{\cal F}}
\newcommand{\Hil}{\mc H}
\newtheorem{thm}{Theorem}[section]
\newtheorem{prop}[thm]{Proposition}
\newenvironment{proof}{\noindent {\bf Proof --}}{\hfill$\square$ \vspace{3mm}\endtrivlist}
\newcommand{\ip}[2]{\left\langle {#1}\left|\right.{#2} \right\rangle}
\newcommand{\balpha}{{\mbox{\boldmath${\alpha}$}}}
\newcommand{\bgamma}{{\mbox{\boldmath${\gamma}$}}}
\newcommand{\bbeta}{{\mbox{\boldmath${\beta}$}}}
\newcommand{\balphabar}{{\overline{\mbox{\boldmath${\alpha}$}}}}
\newcommand{\bgammabar}{{\overline{\mbox{\boldmath${\gamma}$}}}}
\newcommand{\bbetabar}{{\overline{\mbox{\boldmath${\beta}$}}}}
\begin{document}

\thispagestyle{empty}

\vspace*{2cm}

\begin{center}
{{\Large \bf Non-self-adjoint hamiltonians defined by Riesz bases }}\\[10mm]
%Remarks on non-self-adjoint hamiltonians{\color{red}: the role of Riesz bases

%}

{\large F. Bagarello} \footnote[1]{ Dipartimento di Energia, Ingegneria dell'Informazione e Modelli Matematici,
Facolt\`a di Ingegneria, Universit\`a di Palermo, I-90128  Palermo, and INFN, Universit\`a di di Torino, ITALY\\
e-mail: fabio.bagarello@unipa.it\,\,\,\, Home page: www.unipa.it/fabio.bagarello}
\vspace{3mm}\\

{\large A. Inoue} \footnote[2]{Department of Applied Mathematics, Fukuoka University, Fukuoka 814-0180, Japan\\
e-mail: a-inoue@fukuoka-u.ac.jp}
\vspace{3mm}\\

{\large C. Trapani}\footnote[3]{Dipartimento di Matematica e Informatica, Universit\`a di Palermo, I-90123 Palermo, Italy\\e-mail: camillo.trapani@unipa.it}
\end{center}

\vspace*{2cm}

\begin{abstract}
\noindent We discuss some features of non-self-adjoint Hamiltonians with real discrete simple spectrum under the assumption that the eigenvectors form a Riesz basis of Hilbert space. Among other things, {we give conditions under which these Hamiltonians} can be factorized in terms of generalized lowering and raising operators.

\end{abstract}

%{\bf PACS Numbers}:  .......

\vfill

%\pagenumbering{roman}

\newpage

\section{Introduction}

In the recent literature an increasing interest has been devoted to non-self-adjoint Hamiltonians, mainly in connections with {\em quasi-hermitian quantum mechanics}, and its relatives. A very extensive literature has been produced in the past ten years, mostly physically oriented, \cite{mosta}. However, also some mathematically-minded results have been obtained,  as, for instance, in \cite{siegl,fb1,bagnew, jpact}.  In \cite{mosta2}, in particular, the author has considered non-hermitian hamiltonians with real spectrum. However, in our opinion, more attention should be paid to the fact that an hamiltonian operator, self-adjoint or not, is usually an unbounded operator. This means that domain problems, and not only, arise out of this feature. In this paper, continuing an analysis already began some years ago, \cite{bagpbrep,bit2011,bit2012} we consider this aspect of the theory, taking into account the unbounded nature of several operators appearing in the game.

Let $H$ be a (non necessarily self-adjoint) {closed} operator defined on a dense subset $D(H)$ of the Hilbert space $\Hil$, with inner product
$\ip{\cdot}{\cdot}$, linear in the first entry, and related norm $\|.\|$.

We assume here that $H$ has purely discrete simple spectrum { (this means that the spectrum, $\sigma(H)$, consists only of isolated eigenvalues
 with multiplicity one)}. Even though the interesting case for physical applications is $\sigma(H)\subset {\mb R}$ we
 will not make this restrictive assumption but we will suppose that the corresponding  eigenvectors form a Riesz basis, \cite{nota1},
 $\F_\phi=\{\phi_n,\,n\geq0\}$ for $\Hil$, see \cite{rieszbasis}.
This means that there exist an orthonormal basis (ONB, for short) $\E=\{e_n,\, n\geq 0\}$ and a bounded operator $T$,
invertible and with bounded inverse $T^{-1}$ such that $\phi_n=Te_n$ for all $n\geq0$. Let us now define $\psi_n=(T^{-1})^* e_n$ and $\F_\psi=\{\psi_n,\,n\geq0\}$. It is clear that $\F_\psi$ is a Riesz basis, too. Moreover, it is biorthogonal to $\F_\phi$: $\ip{\phi_n}{\psi_m}=\delta_{n,m}$, and
$$
f=\sum_{n=0}^\infty\ip{f}{\phi_n}\psi_n=\sum_{n=0}^\infty \ip{f}{\psi_n}\phi_n,\quad
\forall f \in \Hil.$$

Furthermore, the operators $S_\phi$ and $S_\psi$ defined by
$$
S_\phi f=\sum_{n=0}^\infty \ip{f}{\phi_n}\phi_n,\qquad S_\psi f=\sum_{n=0}^\infty \ip{f}{\psi_n}\psi_n,
$$
are bounded, everywhere defined  in $\Hil$, positive and, therefore, self-adjoint. Also, they are one the inverse of the other: $S_\phi=(S_\psi)^{-1}$ and $S_\phi=TT^*$.

{Actually, the operators $S_\psi$ intertwines $H$ and $H^*$ (and, of course, $S_\phi$ intertwines $H^*$ and $H$). These operators are, in fact {\em similar} and $H$ is {\em quasi-Hermitian} \cite{jpctsmall}, that is,
 $S_\psi H$ is a symmetric operator, i.e., $S_\psi H \subseteq (S_\psi H)^*$,
 and it is then natural to pose the question if $H$ is quasi-self-adjoint; i.e.the equality $S_\psi H = (S_\psi H)^*$ holds. The latter condition, in turn, is equivalent to saying that $H$ is  similar to a self-adjoint operator.}

In order to answer to this and other questions, we perform in this paper (Section 2) a detailed analysis of operators defined  by Riesz bases. In particular we characterize domains and adjoints; moreover we construct intertwining operators in general form and show that, if the eigenvalues are real, then the operator can be made self-adjoint in a different Hilbert space, {more explicitly, in the same space but endowed with a different inner product}. A similar analysis has been carried out by many authors before, see, for instance, \cite{mosta2}. However, in our knowledge, this is the first systematic and mathematically {minded}  study of the problem.  In Section 3, following our previous work \cite{bit2012} we construct and study generalized raising and lowering operators, and we discuss the possibility of factorizing a given hamiltonian.
Section 4 is devoted to examples and applications, while our conclusions are given in Section 5.

\section{Some operators defined by Riesz bases}
Let $\{\phi_n\}$ be a Riesz basis in $\Hil$ and, as above, $\psi_n= (T^{-1})^*e_n$, $n=0,1,\ldots\,$. As remarked before, $\{\psi_n\}$ is also a Riesz basis and it is biorthogonal to $\{\phi_n\}$; i.e. $\ip{\phi_n}{\psi_m}=\delta_{nm}$.

 Throughout this section let $\balpha=\{\alpha_n\}$ be any sequence of complex numbers. We define two operators \cite{nota2}
$$H_{\phi, \psi}^{\balpha}= \sum_{n=0}^\infty \alpha_n \phi_n \otimes \overline{\psi}_n$$
and
$$H_{\psi, \phi}^\balpha=\sum_{n=0}^\infty \alpha_n \psi_n \otimes \overline{\phi}_n$$
as follows:

$$ \left\{\begin{array}{l} D(H_{\phi, \psi}^\balpha)=\left\{f \in \Hil; \sum_{n=0}^\infty \alpha_n \ip{f}{\psi_n}\phi_n \mbox{ exists in }\Hil \right\}\\
{ } \\
H_{\phi, \psi}^\balpha f= \sum_{n=0}^\infty \alpha_n \ip{f}{\psi_n}\phi_n, \; f \in D(H_{\phi, \psi}^\balpha)
\end{array}\right.
$$

$$ \left\{\begin{array}{l} D(H_{\psi, \phi}^\balpha)=\left\{f \in \Hil; \sum_{n=0}^\infty \alpha_n \ip{f}{\phi_n}\psi_n \mbox{ exists in }\Hil \right\}\\
{ } \\
H_{\psi, \phi}^\balpha f= \sum_{n=0}^\infty \alpha_n \ip{f}{\phi_n}\psi_n, \; f \in D(H_{\phi, \psi}^\balpha)
\end{array}\right. .
$$
Then we have the following
\begin{align}\label{2.1} &\D_\phi:= \mbox{span}\{\phi_n\} \subset D(H_{\phi, \psi}^\balpha)\,\, ;\\
& \D_\psi:= \mbox{span}\{\psi_n\} \subset D(H_{\psi, \phi}^\balpha) \nonumber\,\, ;\\
& H_{\phi, \psi}^\balpha \phi_k =\alpha_k \phi_k, \; k=0,1, \ldots\,\, ;\label{2.2}\\
& H_{\psi, \phi}^\balpha \psi_k =\alpha_k \psi_k, \; k=0,1, \ldots\, \nonumber \,.
\end{align}
Hence, $H_{\phi, \psi}^\balpha$ and $H_{\psi, \phi}^\balpha$ are densely defined.
\begin{prop}\label{prop_2.1}The following statements hold.
\begin{itemize}
\item[(1)]$D(H_{\phi, \psi}^\balpha)=\left\{f \in \Hil; \sum_{n=0}^\infty |\alpha_n|^2 |\ip{f}{\psi_n}|^2 <\infty \right\}$, \\
$D(H_{\psi, \phi}^\balpha)=\left\{f \in \Hil; \sum_{n=0}^\infty |\alpha_n|^2 |\ip{f}{\phi_n}|^2 <\infty \right\}$.
\item[(2)] $H_{\phi, \psi}^\balpha$ and $H_{\psi, \phi}^\balpha$ are closed.
\item[(3)] $\left(H_{\phi, \psi}^\balpha\right)^*=H_{\psi, \phi}^\balphabar$, where $\balphabar=\{\overline{\alpha}_n\}$.
\item[(4)] $H_{\phi, \psi}^\balpha$ is bounded if and only if $H_{\psi, \phi}^\balpha$ is bounded and if and only if $\balpha$ is a bounded sequence.
In particular $H_{\phi, \psi}^{\bm 1}=H_{\psi, \phi}^{\bm 1}=I$, where ${\bm 1}$ is the sequence constantly equal to $1$.
\end{itemize}
\end{prop}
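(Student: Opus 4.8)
The plan is to reduce all four assertions to the single observation that, up to conjugation by the bounded boundedly invertible operators $T$ and $T^*$, both operators are diagonal multiplication operators on the orthonormal basis $\E$. Let $M_{\balpha}$ denote the operator $M_{\balpha}h=\sum_{n}\alpha_n\ip{h}{e_n}e_n$ with its natural maximal domain $D(M_{\balpha})=\{h\in\Hil:\sum_n|\alpha_n|^2|\ip{h}{e_n}|^2<\infty\}$; since $\E$ is orthonormal, this is exactly the set of $h$ for which the series defining $M_{\balpha}h$ converges. Using $\phi_n=Te_n$, $\psi_n=(T^{-1})^*e_n$, $\ip{f}{\psi_n}=\ip{T^{-1}f}{e_n}$ and $\ip{g}{\phi_n}=\ip{T^*g}{e_n}$, one checks directly that
$$H_{\phi,\psi}^{\balpha}=T\,M_{\balpha}\,T^{-1},\qquad H_{\psi,\phi}^{\balpha}=(T^{-1})^*\,M_{\balpha}\,T^*,$$
with equality of domains, because $T$ and $T^*$ are bijections. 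Part (1) is then immediate: $f\in D(H_{\phi,\psi}^{\balpha})$ iff $T^{-1}f\in D(M_{\balpha})$, i.e. iff $\sum_n|\alpha_n|^2|\ip{T^{-1}f}{e_n}|^2=\sum_n|\alpha_n|^2|\ip{f}{\psi_n}|^2<\infty$, and symmetrically for $H_{\psi,\phi}^{\balpha}$.

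For (2) I would use that $M_{\balpha}$ is closed — a standard fact, or seen directly: if $h_k\to h$ and $M_{\balpha}h_k\to g$ then $\ip{g}{e_n}=\lim_k\alpha_n\ip{h_k}{e_n}=\alpha_n\ip{h}{e_n}$, so $g=\sum_n\alpha_n\ip{h}{e_n}e_n$, whence $h\in D(M_{\balpha})$ and $M_{\balpha}h=g$ — together with the elementary remark that $ASB$ is closed whenever $S$ is closed and $A,B$ are bounded with bounded inverses. Hence $H_{\phi,\psi}^{\balpha}=TM_{\balpha}T^{-1}$ and $H_{\psi,\phi}^{\balpha}=(T^{-1})^*M_{\balpha}T^*$ are closed.

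For (3) the quickest route is $\bigl(H_{\phi,\psi}^{\balpha}\bigr)^*=(TM_{\balpha}T^{-1})^*=(T^{-1})^*M_{\balpha}^*T^*=(T^{-1})^*M_{\balphabar}T^*=H_{\psi,\phi}^{\balphabar}$, using $M_{\balpha}^*=M_{\balphabar}$ and the adjoint-of-a-product rules; the one point that really needs care is that these rules give \emph{equality}, not just an inclusion, which does hold because the outer factors are bounded and boundedly invertible. To sidestep this one can give the direct argument: the inclusion $H_{\psi,\phi}^{\balphabar}\subseteq(H_{\phi,\psi}^{\balpha})^*$ is a one-line check on $\D_\phi$ and $\D_\psi$ using biorthogonality, and conversely, if $g\in D((H_{\phi,\psi}^{\balpha})^*)$ and $h:=(H_{\phi,\psi}^{\balpha})^*g$, then testing against $\phi_k$ and using $H_{\phi,\psi}^{\balpha}\phi_k=\alpha_k\phi_k$ gives $\ip{h}{\phi_k}=\overline{\alpha_k}\ip{g}{\phi_k}$; since $\{\psi_k\}$ is a Riesz basis biorthogonal to $\{\phi_k\}$ we have $h=\sum_k\ip{h}{\phi_k}\psi_k$, so $\sum_k\overline{\alpha_k}\ip{g}{\phi_k}\psi_k$ converges (to $h$), i.e. $g\in D(H_{\psi,\phi}^{\balphabar})$ and $H_{\psi,\phi}^{\balphabar}g=h$.

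Finally, for (4): if $\balpha$ is bounded then $M_{\balpha}$ is bounded with $\|M_{\balpha}\|=\sup_n|\alpha_n|$, hence $H_{\phi,\psi}^{\balpha}$ and $H_{\psi,\phi}^{\balpha}$ are bounded (of norm at most $\|T\|\,\|T^{-1}\|\sup_n|\alpha_n|$) and, being closed and densely defined, everywhere defined. Conversely, from $H_{\phi,\psi}^{\balpha}\phi_k=\alpha_k\phi_k$ and the two-sided Riesz-basis bounds $0<m\le\|\phi_k\|\le M$ one reads off $\|H_{\phi,\psi}^{\balpha}\phi_k\|/\|\phi_k\|=|\alpha_k|$, so an unbounded $\balpha$ forces $H_{\phi,\psi}^{\balpha}$ unbounded, and the same computation with $\|\psi_k\|$ settles $H_{\psi,\phi}^{\balpha}$; this gives the claimed equivalence. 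The identity $H_{\phi,\psi}^{\bm 1}=H_{\psi,\phi}^{\bm 1}=I$ follows at once from the expansions $f=\sum_n\ip{f}{\psi_n}\phi_n=\sum_n\ip{f}{\phi_n}\psi_n$ recalled in the Introduction. I expect (3) to be the only place where one must be genuinely careful, for the equality-versus-inclusion reason noted above.
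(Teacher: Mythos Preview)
Your argument is correct. The route, however, differs from the paper's in a genuine way. The paper never names the diagonal operator $M_{\balpha}$ nor the similarity $H_{\phi,\psi}^{\balpha}=TM_{\balpha}T^{-1}$; it works directly with partial sums. For (1) it uses the two--sided bound $\gamma_1\|f\|\le\|Tf\|\le\gamma_2\|f\|$ on the partial sums $\sum_{k=n}^{m}\alpha_k\ip{f}{\psi_k}e_k$ to get the domain characterization (this is, of course, your similarity in disguise, applied locally). For (2) it runs an explicit Cauchy--sequence argument using the inequality from (1), rather than invoking ``$M_{\balpha}$ closed $\Rightarrow$ $TM_{\balpha}T^{-1}$ closed''. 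For (3) the paper gives only the direct argument: test against $\phi_k$ to obtain $\overline{\alpha}_k\ip{g}{\phi_k}=\ip{h}{\phi_k}$ and then conclude $g\in D(H_{\psi,\phi}^{\balphabar})$ via the identity $\sum_k|\alpha_k|^2|\ip{\phi_k}{g}|^2=\sum_k|\ip{\phi_k}{h}|^2=\|T^*h\|^2$ together with the domain description in (1); this is essentially your alternative argument, with the last step phrased through (1) rather than through the Riesz expansion of $h$. Your primary route for (3), via $(TM_{\balpha}T^{-1})^*=(T^{-1})^*M_{\balphabar}T^*$ and the adjoint--of--a--product rules (which do give equality here because the outer factors are bounded bijections), is not in the paper.

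What each approach buys: your structural reduction to $M_{\balpha}$ makes (2) and (3) one--liners and isolates exactly where bounded invertibility of $T$ is used; the paper's hands--on argument avoids quoting the operator--algebra lemmas on closedness and adjoints under bounded conjugation, at the price of a longer computation for (2). Both are entirely sound.
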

\begin{proof} We will prove the statements (1)-(4) for $H_{\phi, \psi}^\balpha$.

(1): \; Since $T$ is invertible and has bounded inverse, there exist positive constants $\gamma_1, \gamma_2$ such that
$$ \gamma_1 \|f\| \leq \|Tf \| \leq \gamma_2 \|f\|, \forall f \in \Hil.$$
Hence, taking into account the equality
$$T\left( \sum_{k=n}^m \alpha_k \ip{f}{\psi_k}e_k\right)= \sum_{k=n}^m \alpha_k \ip{f}{\psi_k}\phi_k$$
we have

\begin{align}\label{2.3}\gamma_1^2 \sum_{k=n}^m |\alpha_k|^2 |\ip{f}{\psi_k}|^2&= \gamma_1^2\left\| \sum_{k=n}^m \alpha_k \ip{f}{\psi_k}e_k\right\|^2 \leq \left\|  \sum_{k=n}^m \alpha_k \ip{f}{\psi_k}\phi_k\right\|^2\\ & \leq
\gamma_2^2\left\| \sum_{k=n}^m \alpha_k \ip{f}{\psi_k}e_k\right\|^2 =\gamma_2^2 \sum_{k=n}^m |\alpha_k|^2 |\ip{f}{\psi_k}|^2,\nonumber
\end{align}
which shows that $f\in D(H_{\phi, \psi}^\balpha)$ if and only if $ \sum_{n=0}^\infty |\alpha_n|^2 |\ip{f}{\psi_n}|^2 <\infty$.

(2): \; Let $\{f_n\}$ be an arbitrary sequence in $D(H_{\phi, \psi}^\balpha)$ such that $f_n \to f$ and $H_{\phi, \psi}^\balpha f_n \to g$.
Then, for every $\epsilon >0$ there exists $N\in {\mb N}$ such that
$$ \left\|\sum_{k=0}^\infty \alpha_k \ip{f_n-f_m}{\psi_k}\phi_k\right\|<\epsilon, \quad \forall n,m\geq N.$$
By \eqref{2.3}, for all $M \in {\mb N}$ and $n,m \geq N$,
$$ \gamma_1^2 \sum_{k=0}^M |\alpha_k|^2 |\ip{f_n-f_m}{\psi_k}|^2 <\epsilon^2.$$
Hence, letting $m \to \infty$,
\begin{equation}\label{2.4} \gamma_1^2 \sum_{k=0}^M |\alpha_k|^2 |\ip{f_n-f}{\psi_k}|^2 \leq\epsilon^2, \quad \forall n\in {\mb N}. \end{equation}
Therefore,
\begin{align*} \gamma_1^2 \sum_{k=0}^M |\alpha_k|^2 |\ip{f}{\psi_k}|^2 &\leq 2 \gamma_1^2 \sum_{k=0}^M |\alpha_k|^2 |\ip{f-f_N}{\psi_k}|^2 + 2 \gamma_1^2 \sum_{k=0}^M |\alpha_k|^2 |\ip{f_N}{\psi_k}|^2 \\ &\leq 2\gamma_1^2 \epsilon^2 + 2 \gamma_1^2 \sum_{k=0}^M |\alpha_k|^2 |\ip{f_N}{\psi_k}|^2
 \end{align*}
and, letting $M \to \infty$,
$$ \gamma_1^2 \sum_{k=0}^\infty |\alpha_k|^2 |\ip{f}{\psi_k}|^2 \leq  2\gamma_1^2 \epsilon^2 + 2 \gamma_1^2 \sum_{k=0}^\infty |\alpha_k|^2 |\ip{f_N}{\psi_k}|^2<\infty.$$
This implies that $f \in D(H_{\phi, \psi}^\balpha)$.

Moreover, by \eqref{2.4}, we get
$$
\|H_{\phi, \psi}^\balpha f_n - H_{\phi, \psi}^\balpha f\|  \leq \gamma_2^2 \sum_{k=0}^\infty |\alpha_k|^2 |\ip{f_n-f}{\psi_k}|^2 \leq \frac{\gamma_2^2}{\gamma_1^2}\epsilon^2, \quad \forall n\geq N.
$$
Hence $$ \lim_{n\to \infty}H_{\phi, \psi}^\balpha f_n =H_{\phi, \psi}^\balpha f.$$
Thus, $H_{\phi, \psi}^\balpha$ is closed.

(3): \; It is easy to show that $\sum_{n=0}^\infty \overline{\alpha}_n \psi_n \otimes \overline{\phi_n} \subseteq (H_{\phi, \psi}^\balpha)^*$. Conversely, let $ g \in D((H_{\phi, \psi}^\balpha)^*)$; then there exists $h \in \Hil$ such that

$$ \ip{\sum_{n=0}^\infty {\alpha}_n (\phi_n \otimes \overline{\psi_n})f}{g}=\ip{f}{h}, \quad \forall f \in D((H_{\phi, \psi}^\balpha).$$

  By \eqref{2.1} and \eqref{2.2}, $\D_\phi \subseteq  D(H_{\phi, \psi}^\balpha)$ and $H_{\phi, \psi}^\balpha \phi_k =\alpha_k \phi_k$, $k=0,1, \ldots \,$ .
  Thus, $\ip{\alpha_k \phi_k}{g}= \ip{\phi_k}{h}$, $k=0,1, \ldots \,$.
  Hence
  $$ \sum_{k=0}^\infty |\alpha_k|^2 |\ip{\phi_k}{g}|^2=\sum_{k=0}^\infty  |\ip{\phi_k}{h}|^2=\sum_{k=0}^\infty  |\ip{e_k}{T^*h}|^2=\|T^*h\|^2.$$
  This implies that $g \in D(H_{\psi,\phi}^\balphabar)$.

(4): \; This is almost trivial.

In very similar way one can prove (1)-(4) for $H_{\psi, \phi}^\balpha$. This completes the proof.
\end{proof}

\medskip

If $\bbeta:=\{\beta_n\}$ is a sequence of complex numbers, we can define two other operators
$$S_\phi^{\bbeta}= \sum_{n=0}^\infty \beta_n \phi_n \otimes \overline{\phi}_n$$
and
$$S_\psi^\bbeta=\sum_{n=0}^\infty \beta_n \psi_n \otimes \overline{\psi}_n$$
as follows:

$$ \left\{\begin{array}{l} D(S_\phi^{\bbeta})=\left\{f \in \Hil; \sum_{n=0}^\infty \beta_n \ip{f}{\phi_n}\phi_n \mbox{ exists in }\Hil \right\}\\
{ } \\
S_\phi^{\bbeta} f= \sum_{n=0}^\infty \beta_n \ip{f}{\phi_n}\phi_n, \; f \in D(S_\phi^{\bbeta})
\end{array}\right.
$$

$$ \left\{\begin{array}{l} D(S_\psi^{\bbeta})=\left\{f \in \Hil; \sum_{n=0}^\infty \beta_n \ip{f}{\psi_n}\psi_n \mbox{ exists in }\Hil \right\}\\
{ } \\
S_\psi^{\bbeta} f= \sum_{n=0}^\infty \beta_n \ip{f}{\psi_n}\psi_n, \; f \in D(H_{S_\psi^{\bbeta}})
\end{array}\right. .
$$
It is clear that
\begin{align}\label{2.5} &\D_\psi\subset D(S_\phi^{\bbeta})\mbox{ and }S_\phi^{\bbeta} \psi_k =\beta_k \phi_k, \; k=0,1, \ldots\,\, ;\\
& \D_\phi\subset D(S_\psi^{\bbeta}) \mbox{ and }  S_\phi^{\bbeta} \phi_k =\beta_k \psi_k, \; k=0,1, \ldots\,
\label{2.6}
\end{align}
Hence, $S_\phi^{\bbeta}$ and $S_\psi^{\bbeta}$ are densely defined, and the following results can be established:
\begin{prop} \label{prop_2.2}
The following statements hold.
\begin{itemize}
\item[(1)]$D(S_\phi^{\bbeta})=\left\{f \in \Hil; \sum_{n=0}^\infty |\beta_n|^2 |\ip{f}{\phi_n}|^2 <\infty \right\}= D(H_{\psi, \phi}^\bbeta)$, \\
$D(S_\psi^{\bbeta})=\left\{f \in \Hil; \sum_{n=0}^\infty |\beta_n|^2 |\ip{f}{\psi_n}|^2 <\infty \right\}=D(H_{\phi, \psi}^\bbeta)$.
\item[(2)] $S_\phi^{\bbeta}$ and $S_\psi^{\bbeta}$ are closed.
\item[(3)] $\left(S_\phi^{\bbeta}\right)^*=S_\phi^{\bbetabar}$ and $\left(S_\psi^{\bbeta}\right)^*=S_\psi^{\bbetabar}$, where $\bbetabar=\{\overline{\beta_n}\}$.
\item[(4)] If $\{\beta_n\} \subset {\mb R}$ (respectively,  $\{\beta_n\} \subset {\mb R}^+$) then $S_\phi^{\bbeta}$ and $S_\psi^{\bbeta}$ are self-adjoint (respectively, positive self-adjoint). Furthermore, $S_\phi^{\bbeta}$ is bounded if and only if $S_\psi^{\bbeta}$ is bounded and if and only if $\bbeta$ is a bounded sequence.
\item[(4)] If, $\bbeta={\bm 1}$, where, as before, ${\bm 1}$ denotes the sequence constantly equal to $1$, then $S_\phi:=S_\phi^{\bm 1}$ and $S_\psi:=S_\psi^{\bm 1}$ are bounded positive self-adjoint operators on $\Hil$ and they are inverses of each other, that is $S_\phi= (S_\psi)^{-1}$, and $S_\phi =TT^*$, $S_\psi = (T^{-1})^*T^{-1}$.
\end{itemize}
\end{prop}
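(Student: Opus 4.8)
\medskip

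The plan is to run two parallel routes. Each item can be obtained by transcribing \emph{verbatim} the corresponding argument from the proof of Proposition~\ref{prop_2.1}, since the only structural input there is that $\{\phi_n\}$, and equally $\{\psi_n\}$, is the image of an ONB under a bounded, boundedly invertible operator. More uniformly, I would first record the factorizations
$$
S_\phi^{\bbeta}=T\,M_{\bbeta}\,T^{*},\qquad S_\psi^{\bbeta}=(T^{-1})^{*}\,M_{\bbeta}\,T^{-1},
$$
(equalities of operators, domains included), where $M_{\bbeta}$ is the diagonal operator $M_{\bbeta}g=\sum_{n}\beta_{n}\ip{g}{e_n}e_n$ on the ONB $\E$. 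These follow at once from $\phi_n=Te_n$, $\psi_n=(T^{-1})^{*}e_n$, $\ip{f}{\phi_n}=\ip{T^{*}f}{e_n}$ and $\ip{f}{\psi_n}=\ip{T^{-1}f}{e_n}$, and I would use whichever of the two descriptions is more convenient for a given item.

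For (1): the inequalities $\gamma_1\|g\|\le\|Tg\|\le\gamma_2\|g\|$, applied to the partial sums $g=\sum_{k=n}^{m}\beta_k\ip{f}{\phi_k}e_k$ exactly as in \eqref{2.3}, show that $\sum_k\beta_k\ip{f}{\phi_k}\phi_k$ converges in $\Hil$ if and only if $\sum_k|\beta_k|^{2}|\ip{f}{\phi_k}|^{2}<\infty$; by Proposition~\ref{prop_2.1}(1) applied with $\balpha=\bbeta$, this last set is precisely $D(H_{\psi,\phi}^{\bbeta})$, so $D(S_\phi^{\bbeta})=D(H_{\psi,\phi}^{\bbeta})$, and the statement for $S_\psi^{\bbeta}$ is symmetric. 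For (2): either repeat the Cauchy-sequence argument of Proposition~\ref{prop_2.1}(2) with $\psi_k$ replaced throughout by $\phi_k$, or observe that $M_{\bbeta}$ is closed (a diagonal operator with respect to an ONB is graph-closed) and that pre- and post-multiplication by bounded, boundedly invertible operators preserves closedness, so $TM_{\bbeta}T^{*}$ and $(T^{-1})^{*}M_{\bbeta}T^{-1}$ are closed.

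For (3): from the factorizations and the adjoint rule $(AXB)^{*}=B^{*}X^{*}A^{*}$ — applicable here because the bounded factors $T,T^{*}$ (resp.\ $(T^{-1})^{*},T^{-1}$) are everywhere defined and invertible — together with $M_{\bbeta}^{*}=M_{\bbetabar}$, one gets $(S_\phi^{\bbeta})^{*}=TM_{\bbetabar}T^{*}=S_\phi^{\bbetabar}$ and similarly $(S_\psi^{\bbeta})^{*}=S_\psi^{\bbetabar}$; alternatively one mimics Proposition~\ref{prop_2.1}(3). For (4): if $\bbeta\subset\R$ then $\bbetabar=\bbeta$, so by (3) both operators are self-adjoint; if in addition $\bbeta\subset\R^{+}$, then for $f\in D(S_\phi^{\bbeta})$ one has $\ip{S_\phi^{\bbeta}f}{f}=\ip{M_{\bbeta}T^{*}f}{T^{*}f}\ge0$, so $S_\phi^{\bbeta}$ is positive, and likewise $S_\psi^{\bbeta}$. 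Through the factorizations, boundedness of $S_\phi^{\bbeta}$, of $S_\psi^{\bbeta}$ and of $M_{\bbeta}$ are mutually equivalent, and the last one amounts to $\sup_n|\beta_n|<\infty$. Finally, $\bbeta={\bm 1}$ gives $M_{\bm 1}=I$, whence $S_\phi=TT^{*}$ and $S_\psi=(T^{-1})^{*}T^{-1}$ are bounded, positive and self-adjoint, and $S_\phi S_\psi=TT^{*}(T^{*})^{-1}T^{-1}=I$, so they are inverses of each other.

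The only step asking for a little care is the adjoint identity in (3): one has to make sure the product rule for adjoints applies, which it does precisely because every bounded operator involved ($T,T^{*},T^{-1},(T^{-1})^{*}$) is everywhere defined and surjective; all the rest is a routine adaptation of the computations already performed for the operators $H_{\phi,\psi}^{\balpha}$.
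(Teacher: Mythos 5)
Your proposal is correct, and its first route (transcribing the argument of Proposition~\ref{prop_2.1} with $\psi_k$ replaced by $\phi_k$) is exactly what the paper intends: the paper gives no proof of Proposition~\ref{prop_2.2}, stating only that it is ``similar to that of the previous one.'' Your second route, via the factorizations $S_\phi^{\bbeta}=TM_{\bbeta}T^{*}$ and $S_\psi^{\bbeta}=(T^{-1})^{*}M_{\bbeta}T^{-1}$ with $M_{\bbeta}$ the diagonal operator on the ONB $\E$, is a genuine and worthwhile streamlining: once the domain equality is checked (which follows because $T$ is bounded with bounded inverse, so the relevant series converge simultaneously), items (2), (3) and both versions of (4) reduce to standard facts about diagonal operators conjugated by bounded, boundedly invertible maps, and in particular the self-adjointness and positivity claims — which the ``repeat Proposition~\ref{prop_2.1}'' route only gives after re-running the adjoint computation — become one-line consequences of $M_{\bbeta}^{*}=M_{\bbetabar}$ and $\ip{S_\phi^{\bbeta}f}{f}=\ip{M_{\bbeta}T^{*}f}{T^{*}f}$. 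You are also right to flag the only delicate point, namely that $(XB)^{*}=B^{*}X^{*}$ for a closed densely defined $X$ requires the right-hand bounded factor $B$ to be boundedly invertible (while $(AX)^{*}=X^{*}A^{*}$ holds for any bounded everywhere-defined $A$); both hypotheses are met here, so the argument is complete. The same factorization idea would in fact also shorten the proof of Proposition~\ref{prop_2.1}, writing $H_{\phi,\psi}^{\balpha}=TM_{\balpha}T^{-1}$, which makes the non-normality of these operators transparent.
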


The proof of the Proposition is similar to that of the previous one, and will not be repeated.

Of course, we can construct the pair of operators $H_{\phi, \psi}^\balpha$, $H_{\psi, \phi}^\balpha$ and the pair of operators $S_\phi^{\bbeta}$, $S_\psi^{\bbeta}$ corresponding to different sequences $\balpha=\{\alpha_n\}$ and $\bbeta=\{\beta_n\}$ and study the interplay between them. In particular, we will take $\bbeta={\bm 1}$. This is particularly interesting since the relations between $H_{\phi, \psi}^\balpha$ , $H_{\psi, \phi}^\balpha$, $S_\phi$ and $S_\psi$ are given as follows:
\begin{prop}\label{prop_2.3} The following equalities hold:
\be\label{2add1}\left\{\begin{array}{l}
 S_\psi H_{\phi, \psi}^\balpha= H_{\psi, \phi}^\balpha S_\psi =S_\psi^\balpha,\\
 S_\phi H_{\psi, \phi}^\balpha= H_{\phi, \psi}^\balpha S_\phi=S_\phi^\balpha.
\end{array}\right. \en
\end{prop}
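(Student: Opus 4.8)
The plan is to establish the four identities in \eqref{2add1} one at a time, in each case first showing that the two sides have the same domain and then that they agree on it. Two facts do all the work. First, by Proposition~\ref{prop_2.2} the operators $S_\phi=S_\phi^{\bm 1}$ and $S_\psi=S_\psi^{\bm 1}$ are bounded, everywhere defined, positive and self-adjoint, hence continuous. Second, by biorthogonality $\ip{\phi_n}{\psi_m}=\delta_{nm}$ one has $S_\psi\phi_n=\psi_n$ and $S_\phi\psi_n=\phi_n$ for every $n$, and therefore, using self-adjointness, $\ip{S_\psi f}{\phi_n}=\ip{f}{\psi_n}$ and $\ip{S_\phi f}{\psi_n}=\ip{f}{\phi_n}$ for all $f\in\Hil$.

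For $S_\psi H_{\phi,\psi}^\balpha=S_\psi^\balpha$: since $S_\psi$ is everywhere defined, $D(S_\psi H_{\phi,\psi}^\balpha)=D(H_{\phi,\psi}^\balpha)$, which by Proposition~\ref{prop_2.1}(1) is $\{f\in\Hil:\sum_n|\alpha_n|^2|\ip{f}{\psi_n}|^2<\infty\}$, and this equals $D(S_\psi^\balpha)$ by Proposition~\ref{prop_2.2}(1) (with $\bbeta=\balpha$). On this common domain, continuity of $S_\psi$ lets me pass it through the convergent series defining $H_{\phi,\psi}^\balpha f$, so $S_\psi H_{\phi,\psi}^\balpha f=\sum_n\alpha_n\ip{f}{\psi_n}S_\psi\phi_n=\sum_n\alpha_n\ip{f}{\psi_n}\psi_n=S_\psi^\balpha f$. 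For $H_{\psi,\phi}^\balpha S_\psi=S_\psi^\balpha$: here $f\in D(H_{\psi,\phi}^\balpha S_\psi)$ iff $S_\psi f\in D(H_{\psi,\phi}^\balpha)$, i.e.\ iff $\sum_n|\alpha_n|^2|\ip{S_\psi f}{\phi_n}|^2<\infty$, which by the second displayed fact above is $\sum_n|\alpha_n|^2|\ip{f}{\psi_n}|^2<\infty$; hence the domain is again $D(S_\psi^\balpha)$, and on it $H_{\psi,\phi}^\balpha S_\psi f=\sum_n\alpha_n\ip{S_\psi f}{\phi_n}\psi_n=\sum_n\alpha_n\ip{f}{\psi_n}\psi_n=S_\psi^\balpha f$. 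The two equalities in the second line of \eqref{2add1} are then obtained by repeating the argument verbatim with the roles of $\phi$ and $\psi$ (hence of $S_\phi$ and $S_\psi$) interchanged, using $S_\phi\psi_n=\phi_n$ and $\ip{S_\phi f}{\psi_n}=\ip{f}{\phi_n}$.

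I do not expect a genuine obstacle: the series manipulations are justified purely by boundedness of $S_\phi$ and $S_\psi$, and the computations are a few lines each. The only point that needs care is the domain bookkeeping — one must check that pre- or post-composition with $S_\psi$ (resp.\ $S_\phi$) produces \emph{exactly} the domain of $S_\psi^\balpha$ (resp.\ $S_\phi^\balpha$) as given in Proposition~\ref{prop_2.2}(1), not merely a subset or a superset of it. This is precisely what the identities $\ip{S_\psi f}{\phi_n}=\ip{f}{\psi_n}$ and $\ip{S_\phi f}{\psi_n}=\ip{f}{\phi_n}$ deliver, so once those are in place nothing remains to be checked.
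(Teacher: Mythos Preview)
Your proof is correct and follows essentially the same route as the paper: establish the domain equalities by invoking Propositions~\ref{prop_2.1}(1) and~\ref{prop_2.2}(1) together with the identity $\ip{S_\psi f}{\phi_n}=\ip{f}{\psi_n}$, and then verify that the actions agree on this common domain. You are slightly more explicit than the paper in isolating the preliminary facts $S_\psi\phi_n=\psi_n$, $S_\phi\psi_n=\phi_n$ and in invoking continuity of $S_\psi$ to push it through the defining series, but the argument is the same.
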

\begin{proof} By Proposition \ref{prop_2.2} we have
$D(H_{\phi, \psi}^\balpha)=D(S_\psi^\balpha)$ and $D(H_{\psi, \phi}^\balpha)=D(S_\phi^\balpha)$. Moreover, from Proposition \ref{prop_2.1},
$$ S_\psi f \in D(H_{\psi, \phi}^\balpha) \Leftrightarrow \sum_{n=0}^\infty |\alpha_n|^2 |\ip{S_\psi f}{\phi_n}|^2<\infty \Leftrightarrow
\sum_{n=0}^\infty |\alpha_n|^2 |\ip{f}{\psi_n}|^2<\infty \Leftrightarrow f \in D(S_\psi^\balpha).$$
Thus,
$$ D(S_\psi H_{\phi, \psi}^\balpha)=D(H_{\psi, \phi}^\balpha S_\psi) = D(S_\psi^\balpha).$$
It is easily seen that $S_\psi H_{\phi, \psi}^\balpha f= H_{\psi, \phi}^\balpha S_\psi f = S_\psi^\balpha f$, for every $f \in D(S_\psi^\balpha)$.
Hence,
$$S_\psi H_{\phi, \psi}^\balpha= H_{\psi, \phi}^\balpha S_\psi =S_\psi^\balpha.$$
In similar way one proves that
$$ S_\phi H_{\psi, \phi}^\balpha= H_{\phi, \psi}^\balpha S_\phi=S_\phi^\balpha,$$ as we had to prove.
\end{proof}

As shown in Proposition \ref{prop_2.1}, even if $\balpha=\{\alpha_n\}\subset {\mb R}$ the operator $H_{\phi, \psi}^\balpha$ is not necessarily self-adjoint.
The equality $H_{\psi, \phi}^\balpha S_\psi f =S_\psi H_{\phi, \psi}^\balpha f$, for every $f \in D(H_{\phi, \psi}^\balpha)$, stated in Proposition \ref{prop_2.3}, implies that the operator $H_{\phi, \psi}^\balpha$ is {\em quasi-hermitian} in the sense of \cite{jpctsmall}. Roughly speaking, a quasi-hermitian operator is an operator that can be made hermitian by changing the inner product of the space by means of a bounded {\em metric} operator $G$; i.e.,  $G$ is a bounded, strictly positive operator with bounded inverse. These are exactly the properties that $S_\psi$ and $S_\phi$ enjoy, under the assumptions we are adopting in this paper.  Metric operators define in the Hilbert space $\Hil$ where they act a new inner product, which gives rise to a topology equivalent to the original one of $\Hil$ but with a different {\em distance}. This situation changes deeply, as it is well explained in \cite{siegl}, if one extends the definition of metric operator by including the possibility that the inverse is not necessarily bounded or even that both the operator and its inverse are not bounded (in both cases the original Hilbert space is moved to another one). We refer to \cite{jpctsmall} for a detailed discussion.

Coming back to the situation under consideration, from \cite[Prop. 3.12]{jpctsmall} we get the results stated below. Let us denote by $\Hil(S)$ the Hilbert space obtained by defining on $\Hil$ the inner product $\ip{\cdot}{\cdot}_S$ given by $\ip{f}{g}_S= \ip{S_\psi f}{ g}$, $f,g \in \Hil$.
The quasi-hermitianness of $H_{\phi, \psi}^\balpha$ then implies that $H_{\phi, \psi}^\balpha$ is symmetric with respect to this new inner product; i.e.,
$
\ip{H_{\phi, \psi}^\balpha f}{g}_S=\ip{f}{H_{\phi, \psi}^\balpha g}_S,
$ for $f,g \in D(H_{\phi, \psi}^\balpha)$.
This is, in a certain sense, not surprising, since the set of eigenvectors of $H_{\phi, \psi}^\balpha$, $\F_\phi$, is an ONB in $\Hil$, when endowed with the  inner product $\ip{\cdot}{\cdot}_S$; indeed, $\left<\phi_n,\phi_m\right>_S=\delta_{n,m}$.
\vspace{2mm}

The hermitianness of $H_{\phi, \psi}^\balpha$ in $\Hil(S)$ is however of little use if one wants to make use of the powerful spectral theory for self-adjoint operators.
For this reason it is very convenient to have at hand conditions for the self-adjointness of $H_{\phi, \psi}^\balpha$ in $\Hil(S)$. By  \cite[Prop. 3.12]{jpctsmall}, the self-adjointness of $H_{\phi, \psi}^\balpha$ in $\Hil(S)$ is equivalent to the self-adjointness of the operator ${\sf h}_{\phi, \psi}^\balpha= S_\psi^{1/2} H_{\phi, \psi}^\balpha S_\phi^{1/2}$ in $\Hil$. As we shall see below, ${\sf h}_{\phi, \psi}^\balpha$ is actually self-adjoint when $\{\alpha_n\}\subset {\mb R}$. Hence $H_{\phi, \psi}^\balpha$ is self-adjoint when regarded as an operator in $\Hil(S)$.

Coming back to the general case, let $\balpha=\{\alpha_n\}$ be a sequence of complex numbers. We define the operators ${\sf h}_{\phi, \psi}^\balpha$ and ${\sf h}_{\psi, \phi}^\balpha$ as follows:
\be\label{2add2}\left\{\begin{array}{l}
 {\sf h}_{\phi, \psi}^\balpha =S_\psi^{1/2} H_{\phi, \psi}^\balpha S_\phi^{1/2},\\
{\sf h}_{\psi, \phi}^\balpha = S_\phi^{1/2} H_{\psi, \phi}^\balpha S_\psi^{1/2}.
\end{array}\right. \en

Then,

\begin{prop}\label{prop 2.4} The following statements hold:
\begin{itemize}
\item[(1)] $D({\sf h}_{\phi, \psi}^\balpha)=\{ S_\psi^{1/2}f; f \in D(H_{\phi, \psi}^\balpha)\}$,\\
$D({\sf h}_{\psi, \phi}^\balpha)=\{ S_\phi^{1/2}f; f \in D(H_{\psi, \phi}^\balpha)\}$ \\
and they are dense in $\Hil$.
\item[(2)] $({\sf h}_{\phi, \psi}^\balpha)^* = {\sf h}_{\psi, \phi}^\balphabar$.
\item[(3)] If $\{\alpha_n\}\subset {\mb R}$, then ${\sf h}_{\phi, \psi}^\balpha$ is self-adjoint.
\end{itemize}

\end{prop}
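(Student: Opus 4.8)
The plan is to show that both ${\sf h}_{\phi, \psi}^\balpha$ and ${\sf h}_{\psi, \phi}^\balpha$ are \emph{the same} operator, namely the obvious diagonal operator relative to a cleverly chosen orthonormal basis, after which all three assertions reduce to the orthonormal-basis case of Proposition \ref{prop_2.1}. First I would record the elementary facts about the square roots: since $S_\phi=TT^*$ and $S_\psi=(T^{-1})^*T^{-1}$ are bounded, positive, boundedly invertible and mutually inverse (Proposition \ref{prop_2.2}(4)), the bounded continuous functional calculus gives that $S_\phi^{1/2}$ and $S_\psi^{1/2}$ are bounded, positive, self-adjoint, everywhere defined, and $(S_\phi^{1/2})^{-1}=S_\psi^{1/2}$. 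Then I set $\chi_n:=S_\psi^{1/2}\phi_n$. Biorthogonality gives $S_\phi\psi_n=\sum_k\ip{\psi_n}{\phi_k}\phi_k=\phi_n$, hence $\chi_n=S_\psi^{1/2}\phi_n=S_\phi^{-1/2}\phi_n=S_\phi^{1/2}\psi_n$ as well; consequently $\ip{\chi_n}{\chi_m}=\ip{S_\phi^{1/2}S_\psi^{1/2}\phi_n}{\psi_m}=\ip{\phi_n}{\psi_m}=\delta_{nm}$, and since $S_\psi^{1/2}$ is a bounded bijection of $\Hil$ and $\{\phi_n\}$ is total, $\{\chi_n\}$ is total. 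Thus $\{\chi_n\}$ is an ONB of $\Hil$.

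For (1): because $S_\phi^{1/2}$ is everywhere defined and boundedly invertible with inverse $S_\psi^{1/2}$, the domain of the product is simply $D({\sf h}_{\phi, \psi}^\balpha)=\{f\in\Hil:S_\phi^{1/2}f\in D(H_{\phi, \psi}^\balpha)\}=\{S_\psi^{1/2}g:g\in D(H_{\phi, \psi}^\balpha)\}$ (and similarly for ${\sf h}_{\psi, \phi}^\balpha$), which is the claimed description; density follows since $\D_\phi\subset D(H_{\phi, \psi}^\balpha)$ is dense and $S_\psi^{1/2}$ is a homeomorphism, so $S_\psi^{1/2}(\D_\phi)=\mbox{span}\{\chi_n\}$ is a dense subset of $D({\sf h}_{\phi, \psi}^\balpha)$. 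Substituting $g=S_\phi^{1/2}f$ and using $\ip{S_\phi^{1/2}f}{\psi_n}=\ip{f}{\chi_n}$ together with Proposition \ref{prop_2.1}(1), this domain equals $\{f\in\Hil:\sum_n|\alpha_n|^2|\ip{f}{\chi_n}|^2<\infty\}$; and for such $f$, bringing the bounded operator $S_\psi^{1/2}$ inside the convergent series gives ${\sf h}_{\phi, \psi}^\balpha f=S_\psi^{1/2}\sum_n\alpha_n\ip{f}{\chi_n}\phi_n=\sum_n\alpha_n\ip{f}{\chi_n}\chi_n$. The identical computation for ${\sf h}_{\psi, \phi}^\balpha$ — now using $S_\psi^{1/2}\chi_n=\psi_n$ and $H_{\psi, \phi}^\balpha\psi_n=\alpha_n\psi_n$ — produces the same domain and the same action. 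Hence ${\sf h}_{\phi, \psi}^\balpha={\sf h}_{\psi, \phi}^\balpha=\sum_n\alpha_n\,\chi_n\otimes\overline{\chi_n}=:H_{\chi,\chi}^\balpha$, the diagonal operator attached to the ONB $\{\chi_n\}$.

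Given this identification, (2) and (3) are immediate. Applying Proposition \ref{prop_2.1}(3) to the trivial Riesz basis $\{\chi_n\}$ — that is, with $T=\id$ and $\phi_n=\psi_n=\chi_n$ — yields $(H_{\chi,\chi}^\balpha)^*=H_{\chi,\chi}^\balphabar$; since the left-hand side is $({\sf h}_{\phi, \psi}^\balpha)^*$ and the right-hand side is ${\sf h}_{\psi, \phi}^\balphabar$, this is exactly (2). If moreover $\{\alpha_n\}\subset\R$, then $\balphabar=\balpha$, so (2) and the coincidence ${\sf h}_{\psi, \phi}^\balpha={\sf h}_{\phi, \psi}^\balpha$ give $({\sf h}_{\phi, \psi}^\balpha)^*={\sf h}_{\psi, \phi}^\balpha={\sf h}_{\phi, \psi}^\balpha$, i.e. (3).

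The step that needs the most care is not conceptual but organizational: verifying that $\{\chi_n\}$ is a genuine orthonormal basis (orthonormality is a short biorthogonality computation, but totality must be deduced from $\{\phi_n\}$ being total and $S_\psi^{1/2}$ being a bounded bijection), and handling the domains of the products of the unbounded $H_{\phi, \psi}^\balpha$ with the bounded, boundedly invertible factors $S_\phi^{1/2},S_\psi^{1/2}$ carefully enough that the domain identities in (1) are exact equalities rather than mere inclusions. Once these points are settled, everything rests on Propositions \ref{prop_2.1} and \ref{prop_2.2}.
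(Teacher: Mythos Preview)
Your proof is correct and takes a genuinely different route from the paper's. The paper treats each item separately: for (2) it uses the general rule $(S_\psi^{1/2} H_{\phi, \psi}^\balpha S_\phi^{1/2})^*\supset S_\phi^{1/2}(H_{\phi, \psi}^\balpha)^*S_\psi^{1/2}$ and then a direct domain argument for the reverse inclusion; for (3) it invokes the intertwining relations of Proposition \ref{prop_2.3} (namely $S_\psi H_{\phi, \psi}^\balpha= H_{\psi, \phi}^\balpha S_\psi$) to shuttle between $D({\sf h}_{\phi, \psi}^\balpha)$ and $D({\sf h}_{\psi, \phi}^\balpha)$ and establish both domain inclusions. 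Your approach is more structural: by introducing the orthonormal basis $\chi_n=S_\psi^{1/2}\phi_n=S_\phi^{1/2}\psi_n$ you identify \emph{both} ${\sf h}_{\phi, \psi}^\balpha$ and ${\sf h}_{\psi, \phi}^\balpha$ with the single diagonal operator $H_{\chi,\chi}^\balpha$, after which (2) and (3) drop out of Proposition \ref{prop_2.1} in the ONB case. This is arguably cleaner and yields the extra information ${\sf h}_{\phi, \psi}^\balpha={\sf h}_{\psi, \phi}^\balpha$ for \emph{all} complex sequences $\balpha$, not only real ones; the paper's argument, by contrast, keeps the intertwining relation \eqref{2add1} at centre stage, which ties the proposition more visibly to the quasi-hermitian structure discussed around it.
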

\begin{proof}

(1): \; This is a simple consequence of the equality $S_\phi^{-1/2}=S_\psi^{1/2}$.

(2): \; By Proposition \ref{prop_2.1}, we have
$$({\sf h}_{\phi, \psi}^\balpha)^*\supset S_\phi^{1/2}(H_{\phi, \psi}^\balpha)^*S_\psi^{1/2}= S_\phi^{1/2}(H_{\psi, \phi}^\balpha)^*S_\psi^{1/2}={\sf h}_{\psi, \phi}^\balphabar.$$
Conversely, let $g \in D(({\sf h}_{\phi, \psi}^\balpha)^*)$. Then, by (1) we have
$$ \ip{{\sf h}_{\phi, \psi}^\balpha S_\psi^{1/2} f}{g} = \ip{S_\psi^{1/2} f}{({\sf h}_{\phi, \psi}^\balpha)^*g}, \quad \forall f \in D(H_{\phi, \psi}^\balpha).$$
This implies that
$$\ip{H_{\phi, \psi}^\balpha f}{S_\psi^{1/2} g}= \ip{f}{S_\psi^{1/2}({\sf h}_{\phi, \psi}^\balpha)^*g}.$$
Hence, we have
\begin{align*}& S_\psi^{1/2} g \in D((H_{\phi, \psi}^\balpha)^*) \mbox { and }\\ &(H_{\phi, \psi}^\balpha)^*S_\psi^{1/2} g = S_\psi^{1/2} ({\sf h}_{\phi, \psi}^\balpha)^*g.
\end{align*}
This in turn implies that
$$ ({\sf h}_{\phi, \psi}^\balpha)^* \subset S_\phi^{1/2}(H_{\phi, \psi}^\balpha)^*S_\psi^{1/2}=S_\phi^{1/2}(H_{\psi, \phi}^\balphabar)S_\psi^{1/2}={\sf h}_{\psi, \phi}^\balphabar.$$

(3): \; Let $\{\alpha_n\}\subset {\mb R}$. For every $f \in D(H_{\phi, \psi}^\balpha)$, we have, making use of Proposition \ref{prop_2.3},
$$ S_\psi^{1/2}f= S_\phi^{1/2}S_\psi f \in S_\phi^{1/2}D(H_{\psi, \phi}^\balpha)=D({\sf h}_{\psi, \phi}^\balpha)=D(({\sf h}_{\phi, \psi}^\balpha)^*).$$
Hence
$$D({\sf h}_{\phi, \psi}^\balpha)\subset D(({\sf h}_{\phi, \psi}^\balpha)^*)$$
and
$$ {\sf h}_{\phi, \psi}^\balpha S_\psi^{1/2}f= S_\psi^{1/2}H_{\phi, \psi}^\balpha f= S_\phi^{1/2}S_\psi H_{\phi, \psi}^\balpha f = S_\phi^{1/2}H_{\psi, \phi}^\balpha S_\psi^{1/2}S_\psi^{1/2}f = ({\sf h}_{\phi, \psi}^\balpha)^*  S_\psi^{1/2}f,$$
for every $f \in D(H_{\phi, \psi}^\balpha)$. Hence, ${\sf h}_{\phi, \psi}^\balpha\subset ({\sf h}_{\phi, \psi}^\balpha)^*$.

Conversely, take an arbitrary $g \in D(H_{\psi, \phi}^\balpha)$. Then,
$$ S_\phi^{1/2} g=  S_\psi^{1/2}S_\phi g \in S_\psi^{1/2} D(H_{\phi, \psi}^\balpha)=D({\sf h}_{\phi, \psi}^\balpha), $$
and so $D(({\sf h}_{\phi, \psi}^\balpha)^*)= D({\sf h}_{\psi, \phi}^\balpha)\subset D({\sf h}_{\phi, \psi}^\balpha)$. Hence, ${\sf h}_{\phi, \psi}^\balpha= ({\sf h}_{\phi, \psi}^\balpha)^*$. This completes the proof.

\end{proof}

Equations (\ref{2add2}) produce $S_\phi^{1/2} {\sf h}_{\phi, \psi}^\balpha= H_{\phi, \psi}^\balpha S_\phi^{1/2}$ and $S_\psi^{1/2} {\sf h}_{\psi, \phi}^\balpha= H_{\psi, \phi}^\balpha S_\psi^{1/2}$, which, together with the equalities in (\ref{2add1}), are intertwining relations between different operators having the same eigenvalues and related eigenvectors. This fact is well known in the physical literature, and we refer to \cite{intop} and references therein for some appearances of these kind of equations in concrete models.

\section{Generalized lowering and raising operators}
Following what we did in \cite{bit2012}, we now introduce generalized lowering and raising operators as follows:

\begin{align*} &A_{\phi, \psi}^{\bgamma}= \sum_{n=1}^\infty \gamma_n \phi_{n-1} \otimes \overline{\psi}_n\\
&A_{\psi, \phi}^\bgamma=\sum_{n=1}^\infty \gamma_n \psi_{n-1} \otimes \overline{\phi}_n\\
&B_{\phi, \psi}^{\bgamma}= \sum_{n=0}^\infty \gamma_{n+1} \phi_{n+1} \otimes \overline{\psi}_n\\
&B_{\psi, \phi}^\bgamma=\sum_{n=0}^\infty \gamma_{n+1} \psi_{n+1} \otimes \overline{\phi}_n.
\end{align*}

\begin{prop}\label{prop_2.5} %Let $\bgamma=\{a_n\}$ be a complex sequence. Denote by $\bgammaone$ the sequence $\{a_{n+1}\}$.
The following statements hold.
\begin{itemize}
\item[(1)]$D(A_{\phi, \psi}^\bgamma)=\left\{f \in \Hil; \sum_{n=1}^\infty |\gamma_n|^2 |\ip{f}{\psi_n}|^2 <\infty \right\}=D(H_{\phi, \psi}^{\{\alpha_{n}\}}))$ \\
$D(A_{\psi, \phi}^\bgamma)=\left\{f \in \Hil; \sum_{n=1}^\infty |\gamma_n|^2 |\ip{f}{\phi_n}|^2 <\infty \right\}=D(H_{\psi, \phi}^{\{\alpha_{n}\}}))$\\
$D(B_{\phi, \psi}^\bgamma)=\left\{f \in \Hil; \sum_{n=0}^\infty |\gamma_{n+1}|^2 |\ip{f}{\psi_n}|^2 <\infty \right\}=D(H_{\phi, \psi}^{\{\alpha_{n+1}\}})$\\
$D(B_{\psi, \phi}^\bgamma)=\left\{f \in \Hil; \sum_{n=0}^\infty |\gamma_{n+1}|^2 |\ip{f}{\phi_n}|^2 <\infty \right\}=D(H_{\psi, \phi}^{\{\alpha_{n+1}\}})$
\item[(2)] $A_{\phi, \psi}^\bgamma$, $A_{\psi, \phi}^\bgamma$, $B_{\phi, \psi}^\bgamma$ and $B_{\psi, \phi}^\bgamma$ are densely defined closed operators in $\Hil$.
\item[(3)] $\left(A_{\phi, \psi}^\bgamma\right)^*=B_{\psi, \phi}^\bgammabar$ and
$\left(A_{\psi, \phi}^\bgamma\right)^*=B_{\phi, \psi}^\bgammabar$.
\end{itemize}
\end{prop}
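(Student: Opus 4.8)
The plan is to reduce everything to Proposition~\ref{prop_2.1} by observing that the operators $A_{\phi,\psi}^\bgamma$, $A_{\psi,\phi}^\bgamma$, $B_{\phi,\psi}^\bgamma$, $B_{\psi,\phi}^\bgamma$ are, up to a unitary shift on the ONB $\E$, of exactly the same type as $H_{\phi,\psi}^\balpha$ and $H_{\psi,\phi}^\balpha$. First I would prove (1). For $A_{\phi,\psi}^\bgamma$, note that $\phi_{n-1}=Te_{n-1}$, so for a partial sum one has $\sum_{k=n}^m \gamma_k \ip{f}{\psi_k}\phi_{k-1}=T\bigl(\sum_{k=n}^m \gamma_k \ip{f}{\psi_k}e_{k-1}\bigr)$, and since $\{e_{k-1}\}_{k\ge1}$ is orthonormal, the Riesz bounds $\gamma_1\|\cdot\|\le\|T\cdot\|\le\gamma_2\|\cdot\|$ give, exactly as in \eqref{2.3},
\[
\gamma_1^2\sum_{k=n}^m|\gamma_k|^2|\ip{f}{\psi_k}|^2
\le\Bigl\|\sum_{k=n}^m\gamma_k\ip{f}{\psi_k}\phi_{k-1}\Bigr\|^2
\le\gamma_2^2\sum_{k=n}^m|\gamma_k|^2|\ip{f}{\psi_k}|^2 .
\]
Hence the defining series converges in $\Hil$ iff $\sum_n|\gamma_n|^2|\ip{f}{\psi_n}|^2<\infty$, which by Proposition~\ref{prop_2.1}(1) (with $\balpha=\bgamma$) is precisely $D(H_{\phi,\psi}^{\bgamma})$. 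The same computation with $\psi_{k-1}=(T^{-1})^*e_{k-1}$ handles $A_{\psi,\phi}^\bgamma$, and the shift $n\mapsto n+1$ (so the relevant coefficient sequence is $\{\gamma_{n+1}\}$) handles $B_{\phi,\psi}^\bgamma$ and $B_{\psi,\phi}^\bgamma$; in the latter two cases the indexing of the inner products $\ip{f}{\psi_n}$ (resp.\ $\ip{f}{\phi_n}$) starts at $n=0$, matching $D(H_{\phi,\psi}^{\{\alpha_{n+1}\}})$ (resp.\ $D(H_{\psi,\phi}^{\{\alpha_{n+1}\}})$).

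For (2), density follows immediately from (1) together with \eqref{2.1}: $\D_\phi\subset D(H_{\phi,\psi}^\bgamma)=D(A_{\phi,\psi}^\bgamma)$ and likewise for the others (one checks directly that $A_{\phi,\psi}^\bgamma\phi_k=\gamma_{k+1}\phi_k$, $B_{\phi,\psi}^\bgamma\phi_k=\gamma_{k+1}\phi_{k+1}$, etc.). Closedness is the verbatim argument of Proposition~\ref{prop_2.1}(2): given $f_n\to f$ and $A_{\phi,\psi}^\bgamma f_n\to g$, the lower Riesz bound applied to finite partial sums gives, for each $M$, $\gamma_1^2\sum_{k=1}^M|\gamma_k|^2|\ip{f_n-f}{\psi_k}|^2\le\epsilon^2$ for all large $n$; letting $M\to\infty$ shows $f\in D(A_{\phi,\psi}^\bgamma)$, and the upper bound shows $A_{\phi,\psi}^\bgamma f_n\to A_{\phi,\psi}^\bgamma f$, so $g=A_{\phi,\psi}^\bgamma f$. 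No new idea is needed; one just transcribes the proof with $\phi_k\rightsquigarrow\phi_{k-1}$ (resp.\ $\phi_{k+1}$), using that these shifted families are still the $T$-images of an orthonormal system.

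For (3), I would first check the easy inclusion $B_{\psi,\phi}^\bgammabar\subseteq(A_{\phi,\psi}^\bgamma)^*$ by a direct computation on the dense domains: for $f\in\D_\psi$, $h\in\D_\phi$ one has $\ip{A_{\phi,\psi}^\bgamma f}{h}=\sum_n\gamma_n\ip{f}{\psi_n}\ip{\phi_{n-1}}{h}$ and $\ip{f}{B_{\psi,\phi}^\bgammabar h}=\sum_n\overline{\overline{\gamma_{n+1}}}\,\ip{f}{\phi_n}\overline{\ip{\psi_{n+1}}{\cdot}}$—after relabeling $n-1\to n$ these match, and one extends to the full domains by closedness. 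For the reverse inclusion I would mimic Proposition~\ref{prop_2.1}(3): if $g\in D((A_{\phi,\psi}^\bgamma)^*)$ with $(A_{\phi,\psi}^\bgamma)^*g=h$, then testing against $\phi_k\in\D_\phi$ and using $A_{\phi,\psi}^\bgamma\phi_k=\gamma_{k+1}\phi_k$ gives $\overline{\gamma_{k+1}}\ip{g}{\phi_k}=\ip{h}{\phi_{k+1}}$ for all $k\ge0$, whence
\[
\sum_{k=0}^\infty|\gamma_{k+1}|^2|\ip{g}{\phi_k}|^2=\sum_{k=0}^\infty|\ip{h}{\phi_{k+1}}|^2=\sum_{k=1}^\infty|\ip{e_k}{T^*h}|^2\le\|T^*h\|^2<\infty,
\]
so by (1) applied to $B_{\psi,\phi}^\bgammabar$ we get $g\in D(B_{\psi,\phi}^\bgammabar)$, and combined with the first inclusion this forces $(A_{\phi,\psi}^\bgamma)^*=B_{\psi,\phi}^\bgammabar$. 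The identity $(A_{\psi,\phi}^\bgamma)^*=B_{\phi,\psi}^\bgammabar$ is obtained by interchanging the roles of $\F_\phi$ and $\F_\psi$ (equivalently $T\leftrightarrow(T^{-1})^*$) throughout. The only point that needs a little care—and the main (mild) obstacle—is keeping the index shift consistent: the adjoint of a ``lowering'' operator with coefficient pattern $\gamma_n$ attached to the pair $(\phi_{n-1},\psi_n)$ must come out as the ``raising'' operator with the same coefficients now attached to $(\psi_{n+1},\phi_n)$, and one has to verify that the domain characterization in (1) lines up the two index conventions correctly; the display above is exactly the bookkeeping that confirms it.
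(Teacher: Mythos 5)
Your argument is correct and follows exactly the route the paper intends: the paper's proof of this proposition is literally the single line ``The proof is similar to that of Proposition \ref{prop_2.1},'' and your reduction --- viewing each shifted family $\{\phi_{n-1}\}$, $\{\psi_{n+1}\}$ as the image under $T$ (or $(T^{-1})^*$) of an orthonormal system and then transcribing the sandwich estimate, the closedness argument, and the adjoint computation --- is precisely that similarity made explicit. The only blemish is the stated relation $A_{\phi,\psi}^{\bgamma}\phi_k=\gamma_{k+1}\phi_k$, which should read $A_{\phi,\psi}^{\bgamma}\phi_k=\gamma_k\phi_{k-1}$ (with $A_{\phi,\psi}^{\bgamma}\phi_0=0$); this slip does not affect the rest, since density needs only $\D_\phi\subset D(A_{\phi,\psi}^{\bgamma})$ and your displayed identity $\overline{\gamma_{k+1}}\ip{g}{\phi_k}=\ip{h}{\phi_{k+1}}$ is exactly what the correct relation yields.
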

The proof is similar to that of Proposition \ref{prop_2.1}.

Since the relevance of these operators relies essentially on their products, we give the following

\begin{prop}\label{addprop32}The following statements hold.
\begin{itemize}
\item[(1)] $D(A_{\phi, \psi}^\bgamma B_{\phi, \psi}^\bgamma) =\left\{ f\in \Hil; \sum_{n=0}^\infty |\gamma_{n+1}|^2 |\ip{f}{\psi_n}|^2 <\infty \mbox{ and }  \sum_{n=0}^\infty |\gamma_{n+1}|^4 |\ip{f}{\psi_n}|^2 <\infty\right\}$
    and

    $ A_{\phi, \psi}^\bgamma B_{\phi, \psi}^\bgamma \subset \sum_{n=0}^\infty \gamma_{n+1}^2 \phi_{n} \otimes \overline{\psi}_n =H_{\phi, \psi}^{\{\gamma_{n+1}^2\}}$.

    \item[(2)] $D(B_{\phi, \psi}^\bgamma A_{\phi, \psi}^\bgamma) =\left\{ f\in \Hil; \sum_{n=0}^\infty |\gamma_{n}|^2 |\ip{f}{\psi_n}|^2 <\infty \mbox{ and }  \sum_{n=0}^\infty |\gamma_{n}|^4 |\ip{f}{\psi_n}|^2 <\infty\right\}$
    and

    $ B_{\phi, \psi}^\bgamma A_{\phi, \psi}^\bgamma \subset \sum_{n=0}^\infty \gamma_{n }^2 \phi_{n} \otimes \overline{\psi}_n =H_{\phi, \psi}^{\{\gamma_{n}^2\}}$.
\end{itemize}
Suppose that $|\gamma_0| \leq |\gamma_1| \leq \cdots$\,. Then the following statements (3) and (4) hold:
\begin{itemize}
\item[(3)] $D(B_{\phi, \psi}^\bgamma A_{\phi, \psi}^\bgamma) =\left\{ f\in \Hil;  \sum_{n=0}^\infty |\gamma_{n}|^4 |\ip{f}{\psi_n}|^2 <\infty\right\}= D(H_{\phi, \psi}^{\{\gamma_{n}^2\}})$;\\
    $D(A_{\phi, \psi}^\bgamma B_{\phi, \psi}^\bgamma) =\left\{ f\in \Hil; \ \sum_{n=0}^\infty |\gamma_{n+1}|^4 |\ip{f}{\psi_n}|^2 <\infty\right\}=D(H_{\phi, \psi}^{\{\gamma_{n+1}^2\}})$;\\
    $B_{\phi, \psi}^\bgamma A_{\phi, \psi}^\bgamma \subset A_{\phi, \psi}^\bgamma B_{\phi, \psi}^\bgamma$;
    \\$B_{\phi, \psi}^\bgamma A_{\phi, \psi}^\bgamma = \sum_{n=0}^\infty \gamma_{n}^2 \phi_{n} \otimes \overline{\psi}_n =H_{\phi, \psi}^{\{\gamma_{n}^2\}}$;\\
    $A_{\phi, \psi}^\bgamma B_{\phi, \psi}^\bgamma = \sum_{n=0}^\infty \gamma_{n+1}^2 \phi_{n} \otimes \overline{\psi}_n =H_{\phi, \psi}^{\{\gamma_{n+1}^2\}}$.
    \item[(4)] $A_{\phi, \psi}^\bgamma B_{\phi, \psi}^\bgamma f - B_{\phi, \psi}^\bgamma A_{\phi, \psi}^\bgamma f =H_{\phi, \psi}^{\{\gamma_{n+1}^2\}}f - H_{\phi, \psi}^{\{\gamma_{n}^2\}}f = \left( \sum_{n=0}^\infty (\gamma_{n+1}^2 - \gamma_{n}^2)\phi_{n} \otimes \overline{\psi}_n
    \right)f$, for every $f \in D(H_{\phi, \psi}^{\{\gamma_{n+1}^2\}})$.
\end{itemize}
\end{prop}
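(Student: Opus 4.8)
The whole proposition reduces to two tools already in hand: the description of $D(A_{\phi,\psi}^\bgamma)$, $D(B_{\phi,\psi}^\bgamma)$ and of their actions from Proposition \ref{prop_2.5}, and the biorthogonality $\ip{\phi_k}{\psi_m}=\delta_{km}$, which lets me read off the coefficient $\ip{g}{\psi_n}$ of any $g\in\Hil$ from its expansion $g=\sum_k\ip{g}{\psi_k}\phi_k$. I use the convention $\gamma_0=0$, consistent with $A_{\phi,\psi}^\bgamma\phi_0=0$, so that the series $\sum_{n\ge1}\gamma_n^2(\cdot)\phi_n$ that will appear coincides with $\sum_{n\ge0}\gamma_n^2(\cdot)\phi_n$.

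For (1) and (2) I would just compose and chase indices. If $f\in D(B_{\phi,\psi}^\bgamma)$ then $B_{\phi,\psi}^\bgamma f=\sum_{n\ge0}\gamma_{n+1}\ip{f}{\psi_n}\phi_{n+1}=\sum_{k\ge1}\gamma_k\ip{f}{\psi_{k-1}}\phi_k$, hence $\ip{B_{\phi,\psi}^\bgamma f}{\psi_k}=\gamma_k\ip{f}{\psi_{k-1}}$ for $k\ge1$. By Proposition \ref{prop_2.5}, $B_{\phi,\psi}^\bgamma f\in D(A_{\phi,\psi}^\bgamma)$ iff $\sum_{k\ge1}|\gamma_k|^2|\gamma_k\ip{f}{\psi_{k-1}}|^2=\sum_{n\ge0}|\gamma_{n+1}|^4|\ip{f}{\psi_n}|^2<\infty$; intersecting with $f\in D(B_{\phi,\psi}^\bgamma)$ gives the stated $D(A_{\phi,\psi}^\bgamma B_{\phi,\psi}^\bgamma)$, and composing gives $A_{\phi,\psi}^\bgamma B_{\phi,\psi}^\bgamma f=\sum_{n\ge0}\gamma_{n+1}^2\ip{f}{\psi_n}\phi_n$. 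Since $D(H_{\phi,\psi}^{\{\gamma_{n+1}^2\}})$ only requires $\sum_{n\ge0}|\gamma_{n+1}|^4|\ip{f}{\psi_n}|^2<\infty$ while $D(A_{\phi,\psi}^\bgamma B_{\phi,\psi}^\bgamma)$ requires in addition $\sum_{n\ge0}|\gamma_{n+1}|^2|\ip{f}{\psi_n}|^2<\infty$, one obtains the operator \emph{inclusion} $A_{\phi,\psi}^\bgamma B_{\phi,\psi}^\bgamma\subset H_{\phi,\psi}^{\{\gamma_{n+1}^2\}}$. The computation for $B_{\phi,\psi}^\bgamma A_{\phi,\psi}^\bgamma$ in (2) is the mirror image, shifting $n\mapsto n+1$ in $A_{\phi,\psi}^\bgamma f$ first; it yields $B_{\phi,\psi}^\bgamma A_{\phi,\psi}^\bgamma f=\sum_{n\ge1}\gamma_n^2\ip{f}{\psi_n}\phi_n=\sum_{n\ge0}\gamma_n^2\ip{f}{\psi_n}\phi_n$ and $B_{\phi,\psi}^\bgamma A_{\phi,\psi}^\bgamma\subset H_{\phi,\psi}^{\{\gamma_n^2\}}$.

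The only new input in (3) is a monotonicity lemma: if $\{|\gamma_n|\}$ is non-decreasing and $f\in\Hil$, then $\sum_n|\gamma_n|^4|\ip{f}{\psi_n}|^2<\infty$ already forces $\sum_n|\gamma_n|^2|\ip{f}{\psi_n}|^2<\infty$. This is trivial if all $\gamma_n$ vanish; otherwise let $N$ be the least index with $\gamma_N\ne0$, so $\gamma_n=0$ for $n<N$ and $|\gamma_n|\ge|\gamma_N|>0$ for $n\ge N$, whence $\sum_n|\gamma_n|^2|\ip{f}{\psi_n}|^2\le|\gamma_N|^{-2}\sum_n|\gamma_n|^4|\ip{f}{\psi_n}|^2<\infty$. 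Applying this to $\{|\gamma_n|\}$ and, separately, to $\{|\gamma_{n+1}|\}$ (also non-decreasing), the extra summability conditions found in (1)--(2) become redundant, so $D(B_{\phi,\psi}^\bgamma A_{\phi,\psi}^\bgamma)=D(H_{\phi,\psi}^{\{\gamma_n^2\}})$ and $D(A_{\phi,\psi}^\bgamma B_{\phi,\psi}^\bgamma)=D(H_{\phi,\psi}^{\{\gamma_{n+1}^2\}})$; since these operators now have the same domain and the same action on it, the inclusions of (1)--(2) become the claimed equalities $B_{\phi,\psi}^\bgamma A_{\phi,\psi}^\bgamma=H_{\phi,\psi}^{\{\gamma_n^2\}}$ and $A_{\phi,\psi}^\bgamma B_{\phi,\psi}^\bgamma=H_{\phi,\psi}^{\{\gamma_{n+1}^2\}}$. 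Finally $\gamma_n^2\le\gamma_{n+1}^2$ gives $\sum_n|\gamma_n|^4|\ip{f}{\psi_n}|^2\le\sum_n|\gamma_{n+1}|^4|\ip{f}{\psi_n}|^2$, i.e. $D(A_{\phi,\psi}^\bgamma B_{\phi,\psi}^\bgamma)\subseteq D(B_{\phi,\psi}^\bgamma A_{\phi,\psi}^\bgamma)$, which is the comparison of the two products asserted in (3) (the operators themselves being distinct, differing by the positive term computed in (4)).

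For (4), fix $f\in D(H_{\phi,\psi}^{\{\gamma_{n+1}^2\}})=D(A_{\phi,\psi}^\bgamma B_{\phi,\psi}^\bgamma)$; by the previous paragraph $f\in D(B_{\phi,\psi}^\bgamma A_{\phi,\psi}^\bgamma)$ as well, and from $0\le\gamma_{n+1}^2-\gamma_n^2\le\gamma_{n+1}^2$ one gets $\sum_n(\gamma_{n+1}^2-\gamma_n^2)^2|\ip{f}{\psi_n}|^2<\infty$, i.e. $f\in D(H_{\phi,\psi}^{\{\gamma_{n+1}^2-\gamma_n^2\}})$. All the relevant series then converge in $\Hil$ and may be subtracted termwise, giving $A_{\phi,\psi}^\bgamma B_{\phi,\psi}^\bgamma f-B_{\phi,\psi}^\bgamma A_{\phi,\psi}^\bgamma f=\sum_n\gamma_{n+1}^2\ip{f}{\psi_n}\phi_n-\sum_n\gamma_n^2\ip{f}{\psi_n}\phi_n=\sum_n(\gamma_{n+1}^2-\gamma_n^2)\ip{f}{\psi_n}\phi_n$, which is $H_{\phi,\psi}^{\{\gamma_{n+1}^2\}}f-H_{\phi,\psi}^{\{\gamma_n^2\}}f=\big(\sum_n(\gamma_{n+1}^2-\gamma_n^2)\phi_n\otimes\overline{\psi}_n\big)f$. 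I expect the genuinely delicate point to be the monotonicity lemma together with its degenerate cases (some $\gamma_n$ zero, or $|\gamma_n|$ unbounded): it is precisely what makes the superfluous summability conditions of (1)--(2) disappear and upgrades the inclusions to equalities; everything else is bookkeeping with the shifts $n\mapsto n\pm1$ and with biorthogonality.
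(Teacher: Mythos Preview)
Your argument is correct and follows essentially the same route as the paper: compose $A$ and $B$ using the domain descriptions of Proposition~\ref{prop_2.5} and biorthogonality, then in (3) show that the fourth-power condition already implies the second-power one, collapsing the two conditions in (1)--(2) to one and turning the inclusions into equalities.

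The one place you differ from the paper is the implication $\sum_n|\gamma_n|^4|\ip{f}{\psi_n}|^2<\infty \Rightarrow \sum_n|\gamma_n|^2|\ip{f}{\psi_n}|^2<\infty$. The paper first disposes of the bounded case (where $A,B$ are bounded) and then, in the unbounded case, splits the sum at $N_1=\max\{n:|\gamma_n|\le1\}$, bounding the tail by the fourth-power series. You instead pick the first nonzero index $N$ and use $|\gamma_n|^2\le|\gamma_N|^{-2}|\gamma_n|^4$ for $n\ge N$, which handles all cases at once without a bounded/unbounded dichotomy. Both are fine; yours is a bit cleaner. You are also more explicit than the paper about the convention $\gamma_0=0$ (needed for $BA=H_{\phi,\psi}^{\{\gamma_n^2\}}$ to hold at $n=0$), about the fact that the line ``$B_{\phi,\psi}^\bgamma A_{\phi,\psi}^\bgamma\subset A_{\phi,\psi}^\bgamma B_{\phi,\psi}^\bgamma$'' should be read as the domain inclusion $D(AB)\subseteq D(BA)$ (the actions differ), and about why the termwise subtraction in (4) is legitimate.
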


\begin{proof} We put for shortness $A:=A_{\phi, \psi}^\bgamma$ and $ B:= B_{\phi, \psi}^\bgamma$.

(1): \; By (1) and (2) of Proposition \ref{prop_2.5} we have
\begin{align*}D(AB)&= \left\{ f\in \Hil; \sum_{n=0}^\infty |\gamma_{n+1}|^2 |\ip{f}{\psi_n}|^2 <\infty \mbox{ and }  \sum_{n=1}^\infty |\gamma_{n}|^2 |\ip{Bf}{\psi_n}|^2 <\infty\right\}\\
&=\left\{ f\in \Hil; \sum_{n=0}^\infty |\gamma_{n+1}|^2 |\ip{f}{\psi_n}|^2 <\infty \mbox{ and }  \sum_{n=0}^\infty |\gamma_{n+1}|^4 |\ip{f}{\psi_n}|^2 <\infty\right\}\\
& \subset \left\{ f\in \Hil; \sum_{n=0}^\infty |\gamma_{n+1}|^2 |\ip{f}{\psi_n}|^2 <\infty\right\}\\
&=D(\sum_{n=0}^\infty \gamma_{n+1}^2 \phi_{n} \otimes \overline{\psi}_n)
\end{align*}
and
$$ ABf = \left(\sum_{n=0}^\infty \gamma_{n+1}^2 \phi_{n} \otimes \overline{\psi}_n\right)f, \quad \forall f\in D(AB).$$
Hence, $$AB \subset \sum_{n=0}^\infty \gamma_{n+1}^2 \phi_{n} \otimes \overline{\psi}_n.$$

The statements for $BA$ in (2) are proved in similar way.

(3): \; If $\{\gamma_{n}\}$ is bounded then both $A$ and $B$ are bounded, and so (3) and (4) hold. As for the general case, we begin with putting
$ N_1 = \max \{n \in {\mb N}\cup \{0\}; |\alpha _n|\leq 1\}$.

Let us suppose that $\sum_{n=0}^\infty |\gamma_{n}|^4 |\ip{f}{\psi_n}|^2 <\infty$. Then we have
$$\sum_{n=0}^\infty |\gamma_{n}|^2|\ip{f}{\psi_n}|^2 \leq \sum_{n=0}^{N_1} |\gamma_{n}|^2|\ip{f}{\psi_n}|^2 +\sum_{n=N_1+1}^\infty |\gamma_{n}|^4 |\ip{f}{\psi_n}|^2 <\infty.$$
This implies that
$$ D(BA)=\left\{ f \in \Hil;\sum_{n=0}^\infty |\gamma_{n}|^4 |\ip{f}{\psi_n}|^2 <\infty \right\}$$ and $$BA=\sum_{n=0}^\infty \gamma_{n}^2 \phi_{n} \otimes \overline{\psi}_n.$$
In similar way, we have
$$ D(AB)=\left\{ f\in \Hil; \ \sum_{n=0}^\infty |\gamma_{n+1}|^4 |\ip{f}{\psi_n}|^2 <\infty\right\}$$
and
$$ AB= \sum_{n=0}^\infty \gamma_{n+1}^2 \phi_{n} \otimes \overline{\psi}_n.$$
Clearly, $D(AB)\subset D(BA)$.

(4): \; This follows easily from (3).
\end{proof}

It is clear that the lowering and raising operators considered above constitute a generalization of annihilation and creation operators of Quantum Mechanics and, as in that case, they can be used to factorize the original hamiltonian. In particular, if the sequence $\balpha=\{\alpha_n\}$ introduced in the previous section is such that $0=\alpha_0<\alpha_1<\alpha_2<\ldots$, and if we take $\gamma_n=\sqrt{\alpha_n}$ here, the sequence $\bgamma$ satisfies the assumptions of Proposition \ref{addprop32}, and we find, for instance that $B_{\phi, \psi}^\bgamma A_{\phi, \psi}^\bgamma = \sum_{n=0}^\infty \alpha_{n} \phi_{n} \otimes \overline{\psi}_n =H_{\phi, \psi}^{\balpha}$: then $H_{\phi, \psi}^{\balpha}$ can be factorized. Moreover, the commutation relation deduced in (4) is a stronger version of the situation considered  \cite{bit2012} and the results obtained there apply.

\section{Examples and applications}

%\medskip
%{\bf A {\em no-go} example:--}
\subsection{A {\em no-go} example}
Recently, in \cite{siegl}, the authors proved that, for the well known cubic hamiltonian $H=p^2+ix^3$, whose eigenvalues are real and positive, see \cite{tateo}, it is possible to find a bounded metric operator { which {\em transforms} $H$ into a self-adjoint operator ${\sf h}$, but this metric operator cannot have a bounded inverse}. On the other hand, as we have seen, in our settings both $S_\psi$ and $S_\phi$ are both bounded. The obvious conclusion is, therefore, that the basis of eigenvectors of $H$ is not a Riesz basis.

%\vspace{2mm}
\subsection{A finite dimensional, one-parameter example}
%{\bf A finite dimensional, one-parameter, example:--}
Finite dimensional examples are quite useful to clarify several aspects of the general framework one is considering. In particular, in the context of PT quantum mechanics, this kind of examples are very common, see, for instance, \cite{findimex}. This is also useful in order to avoid problems with unbounded operators, which are clearly absent in this case.

Let $$H=\left(
\begin{array}{ccc}
 -\frac{3 \cos[t]}{5} & \frac{\sin[t]}{2} & \frac{3 \cos[t]}{5} \\
 -\frac{2 \sin[t]}{5} & \cos[t] & \frac{2 \sin[t]}{5} \\
 -\frac{18 \cos[t]}{5} & 3 \sin[t] & \frac{18 \cos[t]}{5}
\end{array}
\right)
$$
be a one-parameter, non self-adjoint operator in $\Hil={\Bbb C}^3$, our hamiltonian. Notice that $H\neq H^\dagger$ for any value of $t$, which we take, for the time being, in $[0,2\pi[$. The eigenvalues of $H$ are $\epsilon_0=0$, $\epsilon_1=2\cos(t)-1$, $\epsilon_2=2\cos(t)+1$. Then, if $t\in I:=[0,\frac{\pi}{3}[\cup[\frac{5\pi}{3},2\pi]$, the eigenvalues are simple and growing: $\epsilon_0<\epsilon_1<\epsilon_2$. The related eigenvectors are
$$
\phi_0=\left(
\begin{array}{c}
 1 \\
 0 \\
 1
\end{array}
\right),\qquad \phi_1= \left(
\begin{array}{c}
   {\sin[t/2]}\\
   {-2 \cos[t/2]}\\
   {6 \sin[t/2]}
  \end{array}
\right),\qquad \phi_2=\left(
\begin{array}{c}
   {\cos[t/2]}\\
   {2 \sin[t/2]}\\
   {6 \cos[t/2]}
  \end{array}
\right).
$$
Taking now as $\E$ the canonical basis in $\Hil$, the matrix $T$ introduced in Section 1 can be easily identified:
$$
T=\left(
\begin{array}{ccc}
 1 & \sin\left[\frac{t}{2}\right] &\cos\left[\frac{t}{2}\right] \\
 0 & -2 \cos\left[\frac{t}{2}\right] & 2 \sin\left[\frac{t}{2}\right] \\
 1 & 6\sin\left[\frac{t}{2}\right] & 6\cos\left[\frac{t}{2}\right]
\end{array}
\right),
$$
and the vectors of $\F_\psi$, $\psi_n=(T^*)^{-1}e_n$, turns out to be
$$
\psi_0=\frac{1}{5}\left(
\begin{array}{c}
 6 \\
 0 \\
 -1
\end{array}
\right),\qquad \psi_1= \left(
\begin{array}{c}
   -\frac{1}{5}{\sin[t/2]}\\
   {-\frac{1}{2} \cos[t/2]}\\
   {\frac{1}{5} \sin[t/2]}
  \end{array}
\right),\qquad \psi_2=\left(
\begin{array}{c}
   {-\frac{1}{5}\cos[t/2]}\\
   {\frac{1}{2} \sin[t/2]}\\
   {\frac{1}{5} \cos[t/2]}
  \end{array}
\right).
$$
Direct computations show that $\left<\phi_n,\psi_m\right>=\delta_{n,m}$, and that
$$
S_\phi=\left(
\begin{array}{ccc}
 2 & 0 & 7 \\
 0 & 4 & 0 \\
 7 & 0 & 37
\end{array}
\right),
\qquad
S_\psi=
\left(
\begin{array}{ccc}
 \frac{37}{25} & 0 & -\frac{7}{25} \\
 0 & \frac{1}{4} & 0 \\
 -\frac{7}{25} & 0 & \frac{2}{25}
\end{array}
\right).
$$
These matrices, which are manifestly self-adjoint, are also positive and we have $S_\phi=S_\psi^{-1}$, $S_\phi=TT^*$ and $H^* S_\psi=S_\psi H$, as expected. This intertwining relation holds all over $\Hil$, clearly. Moreover, since $\|T\|=\sqrt{\frac{1}{2}(39+7\sqrt{29})}$ and $\|T^{-1}\|=\frac{1}{5}\|T\|$, we conclude, as it was already evident, that $\F_\phi$ and $\F_\psi$ are Riesz bases.

Interestingly enough, the positive square root of  $S_\phi$,
$$
S_\phi^{1/2}=\left(
\begin{array}{ccc}
 1 & 0 & 1 \\
 0 & 2 & 0 \\
 1 & 0 & 6
\end{array}
\right),
$$
does not coincide with $T$ for any possible value of the parameter $t\in I$. This, in a sense, is not surprising since $S_\phi^{1/2}$ is, by construction, self-adjoint, while $T$ is not. Still we find
$$
h=S_\psi^{1/2}HS_\phi^{1/2}=\left(
\begin{array}{ccc}
 0 & 0 & 0 \\
 0 & \cos[t] & \sin[t] \\
 0 & \sin[t] & 3\cos[t]
\end{array}
\right),
$$
which is self-adjoint. As for the raising and lowering operators, they are found to be
{\footnotesize$$
A=\left(
\begin{array}{ccc}
 -\frac{1}{5} \left(\sqrt{\epsilon_1}+\cos\left[\frac{t}{2}\right] \sqrt{\epsilon_2}\right) \sin\left[\frac{t}{2}\right] & \frac{1}{2} \left(-\cos\left[\frac{t}{2}\right] \sqrt{\epsilon_1}+\sqrt{\epsilon_2} \sin\left[\frac{t}{2}\right]^2\right) & \frac{1}{10} \left(2 \sqrt{\epsilon_1} \sin\left[\frac{t}{2}\right]+\sqrt{\epsilon_2} \sin[t]\right) \\
 \frac{1}{5} (1+\cos[t]) \sqrt{\epsilon_2} & -\frac{1}{2} \sqrt{\epsilon_2} \sin[t] & -\frac{1}{5} (1+\cos[t]) \sqrt{\epsilon_2} \\
 \frac{1}{5} \left(-\sqrt{\epsilon_1} \sin\left[\frac{t}{2}\right]-3 \sqrt{\epsilon_2} \sin[t]\right) & -\frac{1}{2} \cos\left[\frac{t}{2}\right] \sqrt{\epsilon_1}+3 \sqrt{\epsilon_2} \sin\left[\frac{t}{2}\right]^2 & \frac{1}{5} \left(\sqrt{\epsilon_1} \sin\left[\frac{t}{2}\right]+3 \sqrt{\epsilon_2} \sin[t]\right)
\end{array}
\right)
$$}
and
{\footnotesize$$
B=\left(
\begin{array}{ccc}
 \frac{6}{5} \sqrt{\epsilon_1} \sin\left[\frac{t}{2}\right]-\frac{1}{10} \sqrt{\epsilon_2} \sin[t] & -\frac{1}{4} (1+\cos[t]) \sqrt{\epsilon_2} & \frac{1}{10} \left(-2 \sqrt{\epsilon_1} \sin\left[\frac{t}{2}\right]+\sqrt{\epsilon_2} \sin[t]\right) \\
 -\frac{2}{5} \left(6 \cos\left[\frac{t}{2}\right] \sqrt{\epsilon_1}+\sqrt{\epsilon_2} \sin\left[\frac{t}{2}\right]^2\right) & -\frac{1}{2} \sqrt{\epsilon_2} \sin[t] & \frac{2}{5} \left(\cos\left[\frac{t}{2}\right] \sqrt{\epsilon_1}+\sqrt{\epsilon_2} \sin\left[\frac{t}{2}\right]^2\right) \\
 \frac{3}{5} \left(12 \sqrt{\epsilon_1} \sin\left[\frac{t}{2}\right]-\sqrt{\epsilon_2} \sin[t]\right) & -\frac{3}{2} (1+\cos[t]) \sqrt{\epsilon_2} & \frac{3}{5} \left(-2 \sqrt{\epsilon_1} \sin\left[\frac{t}{2}\right]+\sqrt{\epsilon_2} \sin[t]\right)
\end{array}
\right).
$$}
Despite of the complicated expressions of these operators, it is possible to check that $A\phi_0=0$, $A\phi_1=\sqrt{\epsilon_1}\phi_0$, $A\phi_2=\sqrt{\epsilon_2}\phi_1$, while $B\phi_0=\sqrt{\epsilon_1}\phi_1$, $B\phi_1=\sqrt{\epsilon_2}\phi_2$ and $B\phi_2=0$. As for their adjoints, we have $B^*\psi_0=0$, $B^*\psi_1=\sqrt{\epsilon_1}\psi_0$, $B^*\psi_2=\sqrt{\epsilon_2}\psi_1$, while $A^*\psi_0=\sqrt{\epsilon_1}\psi_1$, $A^*\psi_1=\sqrt{\epsilon_2}\psi_2$ and $A^*\psi_2=0$. Moreover, $H=BA$. Incidentally, $\F_\phi$ is also a set of eigenstates of the operator $\hat H=AB$, but with different eigenvalues $\tilde\epsilon_n$: $\tilde\epsilon_0=\epsilon_1$, $\tilde\epsilon_1=\epsilon_2$, $\tilde\epsilon_2=\epsilon_0$. Also, $\F_\psi$ is a set of eigenstates of the operator $\hat H^*$, with these same eigenvalues.

Finally we have
$$
a=S_\psi^{1/2}AS_\phi^{1/2}=\left(
\begin{array}{ccc}
 0 & -\cos\left[\frac{t}{2}\right] \sqrt{\epsilon_1} & \sqrt{\epsilon_1} \sin\left[\frac{t}{2}\right] \\
 0 & -\frac{1}{2} \sqrt{\epsilon_2} \sin[t] & -\frac{1}{2} (1+\cos[t]) \sqrt{\epsilon_2} \\
 0 & \sqrt{\epsilon_2} \sin\left[\frac{t}{2}\right]^2 & \frac{1}{2} \sqrt{\epsilon_2} \sin[t]
\end{array}
\right),
$$
and
$$
\Phi_0=S_\psi^{1/2}\phi_0\left(
\begin{array}{c}
 1 \\
 0 \\
 0
\end{array}
\right),\quad \Phi_1=S_\psi^{1/2}\phi_1= \left(
\begin{array}{c}
   0\\
   - \cos[t/2\\
   6 \sin[t/2]
  \end{array}
\right),\quad \Phi_2=S_\psi^{1/2}\phi_2=\left(
\begin{array}{c}
   0\\
    \sin[t/2]\\
   6 \cos[t/2]
  \end{array}
\right),
$$
which is an ON basis of eigenvectors of $h=a^* a$, with eigenvalues $\epsilon_n$. These vectors are also eigenstates of $\hat h=aa^*$, with eigenvalues $\tilde\epsilon_n$.

%\vspace{3mm}

\subsection{An infinite dimensional example}

%{\bf An infinite dimensional example:--}
{ Let $\E=\{e_n,\, n\geq0\}$ be an orthonormal basis of $\Hil$ and $P=P^*=P^2$ an orthogonal projection. Let us put $T= I+iP$. Clearly $T$ is bounded and has bounded inverse; namely, $T^{-1}=I-\frac{i+1}{2}P$.
Then, as discussed in Section 2, if we put $\phi_n= Te_n$ and $\psi_n=(T^{-1})^* e_n$, $n=1,2,\ldots,$ the sets $\F_\phi=\{\phi_n,\,n\in {\mb N}\}$ and $\F_\psi=\{\psi_n,\,n\in {\mb N}\}$ are two biorthogonal Riesz bases.
So that if $\balpha=\{\alpha_n\}$ is a sequence of real numbers, as shown before,
the operator $H^\balpha_{\phi,\psi}$ can be defined. The operator $S_\phi=TT^*$ is then given by $S_\phi=I+P$. It is also easy to compute the positive square roots $S_\phi^{1/2}=I+(\sqrt{2}-1)P$ and $S_\psi^{1/2}=S_\phi^{-1/2}=I-\frac{2-\sqrt{2}}{2}P$.

In order to give a concrete example, we choose a particular form for $P$. Let $u=c_1e_1 +\cdots +c_Ne_N$ be a normalized linear combination of the first $N$ elements of the basis $\E$. We put $P:= u\otimes \overline{u}$. If we suppose that for at least two indices $j,k$, $1\leq j,k \leq N$, the coefficients $c_j$ and $c_k$ are both nonzero, then $P$ does not commute with all projections $e_k\otimes \overline{e_k}$, $k \in {\mb N}$. A direct computation shows that

$$\phi_n=\left\{ \begin{array}{ll}e_n+i\ip{e_n}{u}u = e_n+i \overline{c}_nu &\mbox{ if } n\leq N \\
e_n &\mbox{ if } n> N\end{array} \right.$$
and, similarly,
$$\psi_n=\left\{ \begin{array}{ll}e_n+\frac{i-1}{2}\ip{e_n}{u}u = e_n+\frac{i-1}{2} \overline{c}_nu &\mbox{ if } n\leq N \\
e_n &\mbox{ if } n> N\end{array} \right..$$
Then,
$$ H^\balpha_{\phi,\psi}= \sum_{k=1}^N \alpha_k \phi_k\otimes \overline{\psi}_k + \sum_{k=N+1}^\infty\alpha_k e_k\otimes \overline{e}_k .$$
Then, by Proposition \ref{prop 2.4}, the operator  ${\sf h}_{\phi, \psi}^\balpha =S_\psi^{1/2} H_{\phi, \psi}^\balpha S_\phi^{1/2}$ is self-adjoint and similar to $H_{\phi, \psi}^\balpha$. Thus they have the same spectrum. It is easy to check that $S_\psi^{1/2} (e_k\otimes \overline{e}_k) S_\phi^{1/2}= e_k\otimes \overline{e}_k$ if $k>N$, since $S_\psi^{1/2}$ acts nonidentically only on a finite-dimensional subspace of $\Hil$. Hence ${\sf h}_{\phi, \psi}^\balpha$ has the form
$${\sf h}_{\phi, \psi}^\balpha = \sum_{k=1}^N \alpha_k \zeta_k \otimes \overline{\zeta}_k +\sum_{k=N+1}^\infty \alpha_ke_k\otimes \overline{e_k}, $$
where $\{\zeta_1, \ldots, \zeta_N\}$ is an ON basis of the closed vector subspace spanned by $\{e_1, \ldots, e_N\}$, in general, different from the latter basis. It is worth remarking that in this case $$D({\sf h}_{\phi, \psi}^\balpha)= D(H^\balpha_{\phi,\psi})= \left\{ f\in \Hil;\, \sum_{k=1}^\infty|\alpha_k|^2|\ip{f}{e_k}|^2<\infty\right\}.$$

}

\bigskip

\vspace{3mm}

\section{Conclusions}
The non-self-adjoint hamiltonians we have considered here are probably the simplest ones (they are in a sense so much regular as operators defined by an ONB $\{e_n \}$ are, the crucial difference relying on the fact the latter are all normal operators). But as we said before, as far as we know, a detailed analysis of this interesting case was missing in the literature.

There are many questions that can be posed in this framework. The most interesting (and difficult) is very likely the following: under what conditions can a closed operator $H$ be constructed from a Riesz basis in the way we did in Section 2? We imagine there is no general answer to this problem as well as there is no general answer for the corresponding question in the case of self-adjoint operators, where only certain classes of them are known to have a discrete simple spectrum.

Another interesting question to be studied is similar to that considered in this paper but with a crucial difference:  assume that $H$ has, as before, a purely discrete simple spectrum but that  the corresponding  eigenvectors $\{\phi_n\}$ {\em do not} form a Riesz basis but just a basis or, even less, a $\D$-quasi basis in the sense of \cite{bagnew}: is it possible to reconstruct $H$ from the basis in the same spirit of what we did here for Riesz bases. Work on this matter is in progress and we hope to discuss this case in detail in a future paper.

\section*{Acknowledgements}
The authors would like to acknowledge financial support from the
   Universit\`a di Palermo and from the Fukuoka University.

\end{document}